\newtheorem{theorem}{Theorem}
\newtheorem{corollary}{Corollary}
\newtheorem{proof}{Proof}
\newtheorem{proposition}[corollary]{Proposition}
\begin{document}
%
\title{On Achievable Rates of the Two-user Symmetric Gaussian Interference Channel}

\author{\IEEEauthorblockN{Omar Mehanna, John Marcos and Nihar Jindal}
\IEEEauthorblockA{ University of Minnesota\\
\{meha0006, marc0312, nihar \}@umn.edu }
}

\maketitle

\begin{abstract}
We study the Han-Kobayashi (HK) achievable sum rate for the two-user symmetric Gaussian interference channel.  We find the optimal power split ratio between
the common and private messages (assuming no time-sharing), and derive a closed form expression for the corresponding sum rate. This provides
a finer understanding of the achievable HK sum rate, and allows for precise comparisons between this sum rate and that of orthogonal signaling.  One surprising finding is that despite the fact that the channel is symmetric, allowing for asymmetric power split ratio at both users (i.e., asymmetric rates) can improve the sum rate significantly. Considering the high SNR regime, we specify the interference channel value above which the sum rate achieved using asymmetric power splitting outperforms the symmetric case.
\end{abstract}

\section{Introduction}

The subject of this paper is the interference channel, which is one of the most fundamental models for wireless communication systems.
Although this model is of utmost importance, the capacity region for even the simplest two user symmetric Gaussian interference channel is not
yet fully characterized, except for the special cases of strong interference \cite{Kobayashi, Sato} and very weak interference \cite{Veeravalli,Khandani,Kramer2}.
The best known achievability strategy for the remaining unsolved cases was proposed by Han and Kobayashi (HK) in \cite{Kobayashi}, and it
combines the ideas of time-sharing and rate-splitting (i.e., dividing the transmitted message into two parts: a common part decodable by both receivers, and a private part
decodable only by the intended receiver).
Significant progress towards the general capacity region was made in \cite{Etkin}, where a new capacity upper bound was derived and
it was shown that the HK rate region, restricted to Gaussian inputs and not allowing for time-sharing, comes within one bit of that upper bound.

In this paper we build on \cite{Etkin} and more carefully study the HK achievable rate region.
In \cite{Etkin} the authors chose the private message power to be received at the noise level of the unintended receiver; although
this power split is not optimal, it is sufficient to achieve the one bit result.  On the other hand, we characterize the
exact power split that maximizes the HK achievable sum rate.  By finding the optimal power split, we are able to obtain a closed
form expression for the corresponding maximum HK sum rate.  In turn, this leads to a more precise understanding of the achievable
HK rate.  We are able to use our results to make comparisons between the HK achievable rate (without time-sharing) and the sum rate of
orthogonal signaling (i.e., TDMA/FDMA), and to exactly identify the regions where simple orthogonal signaling outperforms
rate-splitting.

Since the channel is symmetric, one would assume that it is optimal to use symmetric power splitting ratios (i.e., user 1 and user 2 use the same power splitting ratio). However, we surprisingly find that allowing for asymmetric power splits can enhance the sum rate. Specifically, if one user sends only a common message while the second user uses a specific private/common split, the sum rate outperforms that of the symmetric case for a wide range of signal and interference powers. Next, we consider the high SNR regime and precisely identify the interference channel value above which the rate achieved using the asymmetric power splitting outperforms the symmetric case. We further consider specific time sharing schemes and we show that the advantage of using such time sharing schemes, as opposed to the case of not allowing time sharing, is quite small.

\section{Network Model}
\begin{figure}
\centering
\includegraphics[width=6cm]{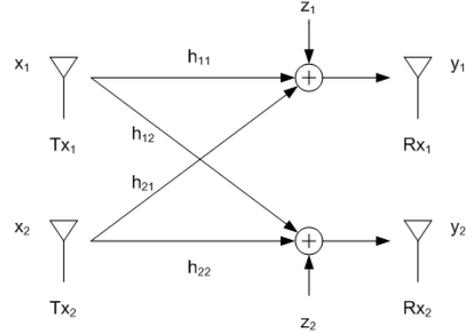}
\caption{Two-user Gaussian interference channel}
\label{GIFC}
\end{figure}

As shown in Figure \ref{GIFC}, the two-user Gaussian interference channel is modeled as:
\begin{equation}
	y_1 = h_{11}x_1+h_{21}x_2+\bar{z}_1,  \quad 	y_2 = h_{12}x_1+h_{22}x_2+\bar{z}_2
\end{equation}
where the noise processes $\bar{z}_i$ are assumed to be circularly symmetric Gaussian random variables with variance $N_0$. The transmitted signal by each user $x_i$ is subject to an average power constraint $P_i$. We also assume that the codebooks used are generated using i.i.d. random Gaussian variables as considered in \cite{Etkin}. We consider the normalized symmetric two-user Gaussian interference channel, specifically:
\begin{equation}
     |h_{11}| = |h_{22}|, \;  |h_{12}| = |h_{21}|,  \;   P_1 = P_2 = P.
\end{equation}
Thus, the channel can be expressed in standard form as:
\begin{equation}
	y_1 = x_1+\sqrt{a} x_2+z_1,  \quad 	y_2 = \sqrt{a} x_1+ x_2+z_2
\end{equation}
where $a = \frac{|h_{21}|^2 }{|h_{11}|^2 } = \frac{|h_{12}|^2 }{|h_{22}|^2}$ and $ z_i \sim \mathcal{C} N(0,1)$. Hence, the channel considered is fully characterized by the parameters $P$ (which represents the SNR) and the interference coefficient $a$.
Since the capacity is already known for the strong interference case  (i.e., $a\geq1$) \cite{Kobayashi}, \cite{Sato}, we focus
exclusively on the range $0 < a < 1$.

Each user chooses a power splitting ratio $\lambda_i$ ($0 \leq \lambda_i \leq 1$) and transmits a private
message $u_i$ with power $P_{u_i} = \lambda_i P_i$ and a common message $w_i$ with power $P_{w_i} = \bar{\lambda_i} P_i$, where $\bar{\lambda_i} = 1 - \lambda_i$.  The common message should be decodable by both receivers.  The channel can be considered as two virtual 3-users multiple access channels,
where the first has $(u_1,w_1,w_2)$ as inputs and $y_1$ as output with $u_2$ treated as noise, and the second has $(u_2,w_2,w_1)$ as inputs, $y_2$ as output, and $u_1$ is treated as noise.
In general, the HK strategy proposed in \cite{Kobayashi} allows for arbitrary splits of each user's transmit power into the private and common messages in addition to time sharing between multiple of such splits. We first consider the restricted case of no time sharing, then the case of allowing time sharing is considered later. Throughout the paper we use the notation $\gamma(x) \triangleq \log_2(1+x)$.

\section{Optimized HK Rate}

\subsection{No Time Sharing}
We first derive an expression for the maximum sum-rate in the HK region assuming no time sharing. 
For fixed $\lambda_1$ and $\lambda_2$, the HK sum-rate is defined as: 
\begin{equation}
 R_{HK}(\lambda_1,\lambda_2) \triangleq \max_{R_{1u},R_{1w},R_{2u},R_{2w}} (\underbrace{R_{1u} + R_{1w}}_{R_1} + \underbrace{R_{2u}+ R_{2w}}_{R_2})
\end{equation}
where $R_{iu}$ is the rate of a private message decoded at receiver $i$, and similarly $R_{iw}$ is the rate of a common message decoded at receiver $i$.

\begin{proposition}
For fixed $\lambda_1$ and $\lambda_2$, the HK sum-rate is:
\begin{eqnarray}\label{R_HK}
\nonumber \lefteqn{ R_{HK}(\lambda_1,\lambda_2)  =    \gamma \bigg( \frac{\lambda_1 P}{1+a\lambda_2 P}\bigg) + \gamma \bigg( \frac{\lambda_2 P}{1+a\lambda_1 P}\bigg) +}\\
   \nonumber     & &    \min \bigg\{  \gamma \bigg( \frac{a\bar{\lambda}_2 P}{1+\lambda_1 P+a\lambda_2 P}\bigg) + \gamma \bigg( \frac{a\bar{\lambda}_1 P}{1+\lambda_2 P+a\lambda_1 P}\bigg) , \\
  \nonumber      & &    \frac{1}{2} \gamma \bigg( \frac{\bar{\lambda}_1 P+a\bar{\lambda}_2 P}{1+\lambda_1 P+a\lambda_2 P}\bigg) + \frac{1}{2} \gamma \bigg( \frac{\bar{\lambda}_2 P+a\bar{\lambda}_1 P}{1+\lambda_2 P+a\lambda_1 P}\bigg) \bigg\}\\
\end{eqnarray}
Hence, the optimized HK sum-rate $R_{RS}$ is the solution to the following optimization problem:
\begin{equation}\label{SumRate}
	 R_{RS}(a,P)  \triangleq \max_{0\leq \lambda_1,\lambda_2 \leq 1} R_{HK}(\lambda_1,\lambda_2)
\end{equation}

\end{proposition}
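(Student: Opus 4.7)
The plan is to treat each receiver as a three-user Gaussian multiple access channel (MAC) and derive the sum rate by Fourier--Motzkin elimination on the resulting fourteen MAC inequalities. At receiver~$1$, the codewords to be jointly decoded are $(u_1,w_1,w_2)$ with $u_2$ playing the role of additional Gaussian noise of variance $a\lambda_2 P$; receiver~$2$ is symmetric. Because each common message must be decoded reliably at \emph{both} receivers, the HK achievable region (without time sharing) is the intersection of the two 3-dimensional MAC capacity regions, embedded in the 4-dimensional rate space with coordinates $(R_{1u},R_{1w},R_{2u},R_{2w})$.

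First I would evaluate the seven MAC bounds at each receiver using i.i.d.\ Gaussian codebooks of powers $\lambda_i P$ and $\bar{\lambda}_i P$, obtaining closed-form $\gamma(\cdot)$ expressions for every mutual-information term. Because $R_{iu}$ appears only in receiver~$i$'s MAC, the single-variable bound $R_{iu}\le \gamma(\lambda_i P/(1+a\lambda_j P))$ can be saturated independently, which produces the first two summands of the claimed formula. The remaining task is to bound the common-message sum $R_{1w}+R_{2w}$ using the constraints from both receivers simultaneously.

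Next I would extract the two branches of the inner minimum by two different linear combinations. Summing the cross-receiver pairwise bounds $R_{1u}+R_{2w}\le I(u_1,w_2;y_1\,|\,w_1)$ and $R_{2u}+R_{1w}\le I(u_2,w_1;y_2\,|\,w_2)$, then subtracting the two saturated private-rate bounds and simplifying the logarithms, yields branch~$A$ of the proposition. Separately, summing the two full-triple MAC bounds $R_{iu}+R_{1w}+R_{2w}\le I(u_i,w_1,w_2;y_i)$ with the two private-rate ceilings and dividing by two yields branch~$B$; the factor $\tfrac12$ arises because $R_{1w}+R_{2w}$ enters each triple bound once, giving a coefficient of two upon summation. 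Taking the tighter of the two bounds gives $R_{HK}(\lambda_1,\lambda_2)$ as stated.

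The main obstacle is matching achievability and checking that no other linear combination of the fourteen MAC inequalities strictly tightens $\min\{A,B\}$. For achievability I would split on the sign of $A-B$: when $A\le B$ the cross-receiver pairwise bounds are binding and one sets each common-message rate equal to its corresponding $\gamma(\cdot)$ summand in~$A$; when $A>B$ both triple bounds are binding simultaneously, which forces a mild redistribution between $R_{1u}$ and $R_{2u}$ away from their individual maxima. Feasibility at the resulting rate tuples reduces to straightforward algebraic checks once the logs are simplified. The Fourier--Motzkin bookkeeping --- verifying that every remaining combination of inequalities is dominated after the private-rate saturation --- is the longest and most error-prone part.
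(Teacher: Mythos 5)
Your overall route is the same as the paper's: both arguments reduce the problem to the intersection of the two three-user MAC regions, observe that the private rates $R_{iu}$ can be pinned to $I(u_i;y_i\,|\,w_1,w_2)=\gamma\big(\lambda_i P/(1+a\lambda_j P)\big)$ (the paper phrases this as fixing the decoding order so that common messages are decoded first and the private message last, citing prior work for the ``no rate loss'' claim), and then bound $R_{1w}+R_{2w}$ by combining constraints across the two receivers. Your branch~$A$ (cross-receiver pairwise bounds minus the saturated private rates) and branch~$B$ (half the sum of the two triple bounds minus the private rates) are exactly the paper's inequalities (8) and (9), and your algebra for extracting the $\gamma(\cdot)$ forms checks out.

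The genuine gap is the step you defer to ``Fourier--Motzkin bookkeeping.'' The proposition asserts a \emph{two}-branch minimum, and that is only true because a third candidate bound --- the one obtained from the same-receiver individual constraints $R_{1w}\le I(w_1;y_1\,|\,u_1,w_2)$ and $R_{2w}\le I(w_2;y_2\,|\,u_2,w_1)$, which gives $R_{1w}+R_{2w}\le \gamma\big(\bar{\lambda}_1P/(1+\lambda_1P+a\lambda_2P)\big)+\gamma\big(\bar{\lambda}_2P/(1+\lambda_2P+a\lambda_1P)\big)$ --- is always dominated by your branch~$A$. The paper proves this explicitly: after clearing denominators the comparison reduces to $(1+P(1+a\lambda_1))(1+P(1+a\lambda_2))\ge(1+P(a+\lambda_1))(1+P(a+\lambda_2))$, which holds term by term precisely because $0<a<1$ and $0\le\lambda_i\le1$. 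This is where the weak-interference hypothesis enters the proposition, so it cannot be waved off as routine elimination; without it the stated formula would need a third term inside the $\min$. Your achievability discussion (splitting on the sign of $A-B$ and redistributing between $R_{1u}$ and $R_{2u}$ when the triple bounds bind) is plausible and is more explicit than the paper, which simply inherits achievability from the fixed-decoding-order corner points, but you should still exhibit the feasible rate tuple in the $A>B$ case rather than assert it.
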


\begin{proof}

Based on the discussions in \cite{Kobayashi, Etkin, Sason}, it can be shown that there is no rate loss if we fix the decoding order at each receiver such that both common messages are decoded first while the private message is decoded last. This is intuitive since by decoding the common message of the other user, part of the interference is canceled off. The private message will be decoded last, while the private message of the other user is treated as noise achieving a sum rate of the private messages of both users given by:
\begin{equation}\label{eqn0}
    R_{1u} + R_{2u} =  \gamma \bigg( \frac{\lambda_1 P}{1+a\lambda_2 P}\bigg) + \gamma \bigg( \frac{\lambda_2 P}{1+a\lambda_1 P}\bigg)
\end{equation}
Since the common messages from both users are decoded first while treating both private messages as interference, the sum rate of both common messages must satisfy the following constraints:
\begin{eqnarray}
 \nonumber \lefteqn{R_{1w} + R_{2w}  \leq} \\
  & & \gamma \bigg( \frac{\bar{\lambda}_1 P}{1+\lambda_1 P+a\lambda_2 P}\bigg) + \gamma \bigg( \frac{\bar{\lambda}_2 P}{1+\lambda_2 P+a\lambda_1 P}\bigg) \label{eqn1}\\
  \nonumber \lefteqn{R_{1w} + R_{2w}   \leq}\\
   & & \gamma \bigg( \frac{a\bar{\lambda}_2 P}{1+\lambda_1 P+a\lambda_2 P}\bigg) + \gamma \bigg( \frac{a\bar{\lambda}_1 P}{1+\lambda_2 P+a\lambda_1 P}\bigg) \label{eqn2}\\
  \nonumber \lefteqn{ R_{1w} + R_{2w}  \leq}\\
   \nonumber & & \frac{1}{2} \gamma \bigg( \frac{\bar{\lambda}_1 P+a\bar{\lambda}_2 P}{1+\lambda_1 P+a\lambda_2 P}\bigg) + \frac{1}{2} \gamma \bigg( \frac{\bar{\lambda}_2 P+a\bar{\lambda}_1 P}{1+\lambda_2 P+a\lambda_1 P}\bigg)\\ \label{eqn3}
\end{eqnarray}
where inequality (\ref{eqn1}) correspond to the individual rate constraint of decoding the common messages $w_1$ at receiver 1 and $w_2$ at receiver 2, inequality (\ref{eqn2}) correspond to the individual rate constraint of decoding $w_2$ at receiver 1 and $w_1$ at receiver 2, and inequality (\ref{eqn3}) correspond to the sum rate constraint of jointly decoding both common messages $w_1$ and $w_2$ at receiver 1 and receiver 2.

Comparing inequalities (\ref{eqn1}) and (\ref{eqn2}), we can see that the bound in (\ref{eqn2}) is more tight than than the bound in (\ref{eqn1}) if:
\begin{eqnarray}
\nonumber \lefteqn{ \log_2 \bigg( \frac{(1+ a\lambda_2 P + P)(1+ a\lambda_1 P + P) }{(1+\lambda_1 P+a\lambda_2 P)(1+\lambda_2 P+a\lambda_1 P)}\bigg) \geq } \\
 & & \log_2 \bigg( \frac{(1+a P + \lambda_1 P)(1+a P + \lambda_2 P)}{(1+\lambda_1 P+a\lambda_2 P)(1+\lambda_2 P+a\lambda_1 P)} \bigg)
\end{eqnarray}
or,
\begin{equation*}
    (1+P(1+ a\lambda_1))(1+P(1+ a\lambda_2)) \geq (1+P(a+ \lambda_1))(1+P(a+ \lambda_2))
\end{equation*}
which is always true for $0 \leq \lambda_i \leq 1$ and $0 < a < 1$. Therefore, combining equations (\ref{eqn0}), (\ref{eqn2}) and (\ref{eqn3}) gives:
\begin{eqnarray}
 \nonumber   \lefteqn{ \max_{R_{1u},R_{1w},R_{2u},R_{2w}} (R_{1u} + R_{1w} + R_{2u}+ R_{2w})  =}\\
 \nonumber & & \gamma \bigg( \frac{\lambda_1 P}{1+a\lambda_2 P}\bigg) + \gamma \bigg( \frac{\lambda_2 P}{1+a\lambda_1 P}\bigg) +\\
  \nonumber      & &    \min \bigg\{  \gamma \bigg( \frac{a\bar{\lambda}_2 P}{1+\lambda_1 P+a\lambda_2 P}\bigg) + \gamma \bigg( \frac{a\bar{\lambda}_1 P}{1+\lambda_2 P+a\lambda_1 P}\bigg) , \\
  \nonumber     & &  \frac{1}{2} \gamma \bigg( \frac{\bar{\lambda}_1 P+a\bar{\lambda}_2 P}{1+\lambda_1 P+a\lambda_2 P}\bigg) + \frac{1}{2} \gamma \bigg( \frac{\bar{\lambda}_2 P+a\bar{\lambda}_1 P}{1+\lambda_2 P+a\lambda_1 P}\bigg) \bigg\}\\
\end{eqnarray}
Finally, the optimized HK rate with no time sharing $R_{RS}$ is achieved by maximizing $ R_{HK}(\lambda_1,\lambda_2)$ with respect to $\lambda_1$ and $\lambda_2$.

\end{proof}


Next, we try to solve the maximization in Preposition 1. Since the channel is symmetric, it might seem that only symmetric power splits (i.e., $\lambda_1=\lambda_2=\lambda_{sym}$) need to be considered.  The following theorem characterizes performance under this restriction:

\begin{theorem}
If the power splits must satisfy $\lambda_1=\lambda_2=\lambda_{sym}$, the maximum symmetric sum rate achievable with rate splitting is:
\begin{eqnarray}\label{EqPwrRate}
\lefteqn {   R_{sym}(a,P) = \max_{0\leq \lambda_1=\lambda_2 \leq 1} R_{HK}(\lambda_1,\lambda_2) }\\
\nonumber &=&    \begin{cases}
    2 \gamma \bigg( \frac{P}{1+aP}\bigg) \qquad \qquad\text{ if }    P \leq \frac{1-a}{a^2}  \\
    \\
    2 \gamma \bigg( \frac{(a^2P+a-1)(1-a)+aP}{1+a(a^2P+a-1)}\bigg) \\
     \qquad\qquad \qquad \qquad\text{ if }  \frac{1-a}{a^2} < P \leq \frac{1-a^3}{a^3(a+1)}\\
     \\
      \gamma \bigg( \frac{1-a}{2a}\bigg) + \gamma \bigg( \frac{(1+a)^2P-(1-a)}{2}\bigg)  \\
     \qquad\qquad \qquad \qquad\text{ if } P > \frac{1-a^3}{a^3(a+1)}
    \end{cases}\\
\end{eqnarray}
and the corresponding optimal power split ratio is:
\begin{equation}\label{EqPwrLamda}
    \lambda^*_{sym} = \begin{cases}
     1  & \text{ if }    P \leq \frac{1-a}{a^2}\\
    \frac{a^2P+a-1}{P} & \text{ if }  \frac{1-a}{a^2} < P \leq \frac{1-a^3}{a^3(a+1)}\\
    \frac{1-a}{(1+a)(a P)} & \text{ if } P > \frac{1-a^3}{a^3(a+1)}
    \end{cases}
\end{equation}

\end{theorem}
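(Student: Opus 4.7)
The plan is to substitute $\lambda_1 = \lambda_2 = \lambda$ into (\ref{R_HK}) from Proposition~1 and reduce the problem to a one-variable, piecewise-smooth maximization. Under this substitution the two candidates in the $\min$ collapse to $A(\lambda) := 2\gamma\!\bigl(\tfrac{a(1-\lambda)P}{1+(1+a)\lambda P}\bigr)$ (from (\ref{eqn2})) and $B(\lambda) := \gamma\!\bigl(\tfrac{(1+a)(1-\lambda)P}{1+(1+a)\lambda P}\bigr)$ (from (\ref{eqn3})), while the private contribution is $2\gamma(\lambda P/(1+a\lambda P))$. To see which bound is tighter I would compare $A$ and $B$ by exponentiating: $A(\lambda) \leq B(\lambda)$ iff $(1+(a+\lambda)P)^2 \leq (1+(1+a)\lambda P)(1+(1+a)P)$, and expanding both sides reduces this to the single factored inequality $(1-\lambda)P\bigl[(1-a) + (\lambda - a^2)P\bigr] \geq 0$. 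Hence $A$ is the active bound iff $\lambda \geq \lambda_0 := \max\{0,\,(a^2 P + a - 1)/P\}$, and $\lambda_0 > 0$ precisely when $P > (1-a)/a^2$.

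Next I would optimize on each side of $\lambda_0$. On $[\lambda_0, 1]$ the product inside the sum of the two logs telescopes and the objective becomes $2\log_2\!\bigl(\tfrac{1+(a+\lambda)P}{1+a\lambda P}\bigr)$; a direct derivative computation shows its sign is $\mathrm{sgn}(1-a-a^2 P)$, so it is maximized at $\lambda = 1$ when $P \leq (1-a)/a^2$ and at $\lambda = \lambda_0$ otherwise. On $[0, \lambda_0]$ a parallel telescoping produces $\log_2\!\bigl(\tfrac{(1+(1+a)\lambda P)(1+(1+a)P)}{(1+a\lambda P)^2}\bigr)$, whose derivative vanishes only at $\lambda_2^* := (1-a)/[a(1+a)P]$ and which is unimodal with maximum there. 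A short algebraic check gives $\lambda_2^* \leq \lambda_0$ iff $P \geq (1-a^3)/[a^3(1+a)]$.

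Assembling the three branches is then straightforward. For $P \leq (1-a)/a^2$, $\lambda_0 = 0$, $[\lambda_0,1]$ is the whole feasible set, and the maximum is at $\lambda = 1$, giving $R_{sym} = 2\gamma(P/(1+aP))$. For $(1-a)/a^2 < P \leq (1-a^3)/[a^3(1+a)]$, the $A$-side optimum on $[\lambda_0, 1]$ sits at the left endpoint (the objective decreases in $\lambda$) while the $B$-side optimum on $[0, \lambda_0]$ sits at the right endpoint (since $\lambda_2^* > \lambda_0$, the $B$-side is still increasing at $\lambda_0$), so both meet at $\lambda^*_{sym} = \lambda_0 = (a^2 P + a - 1)/P$; substituting back yields the middle branch. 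For $P > (1-a^3)/[a^3(1+a)]$, $\lambda_2^*$ lies inside $[0, \lambda_0]$ and is the interior optimum of the $B$-side expression, dominating all $A$-side values; substituting $\lambda_2^*$ and regrouping gives $\gamma((1-a)/(2a)) + \gamma(((1+a)^2 P - (1-a))/2)$.

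The main obstacle is executing the two telescoping simplifications and the factored inequality cleanly: the key identity is $(1+(a+\lambda)P)^2 = 1 + 2(a+\lambda)P + (a+\lambda)^2 P^2$, which appears naturally as $A(\lambda)$ multiplied by the private term in exponentiated form, and a companion cancellation turns $B(\lambda)$ plus the private term into the ratio $(1+(1+a)\lambda P)(1+(1+a)P)/(1+a\lambda P)^2$. Once those identities are in hand the derivative signs are immediate and the boundary matching at $\lambda_0$ is automatic because $A(\lambda_0) = B(\lambda_0)$ by construction.
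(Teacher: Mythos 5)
Your proposal is correct and follows essentially the same route as the paper's Appendix A: both reduce the problem to $\max_{0\leq\lambda\leq 1}\min\{\Psi_1,\Psi_2\}$ after the same telescoping simplifications, and both identify the same three power regimes through the monotonicity of each branch and the crossing point $\lambda_0=(a^2P+a-1)/P$. If anything, your write-up is slightly more explicit, since you derive the threshold $\lambda_0$ from the factored inequality $(1-\lambda)P[(1-a)+(\lambda-a^2)P]\geq 0$ and compute the derivative signs that the paper only asserts as ``straightforward to see.''
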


\begin{proof}
Refer to Appendix A for a detailed proof.
\end{proof}
In the first region $P \leq \frac{1-a}{a^2}$, Theorem 1 shows that transmitting only a private message and treating interference as noise (i.e., $\lambda(a,P)=1$), maximizes the HK sum-rate. This is consistent with the findings of \cite{Veeravalli, Kramer2, Khandani} where it was shown that this strategy achieves capacity for the further restricted region: $P\leq \frac{1}{2}a^{-3/2}-a^{-1}$.

If we constrain one of the users to send only a common message (i.e., $\lambda_i = 0$ where $i=$ 1 or 2), the corresponding maximum sum rate achievable with such a structure $R_{asym}$ is obtained by substituting $\lambda_1=0$ and $\lambda_2= \lambda$ in equation (\ref{SumRate}), hence:
\begin{eqnarray}\label{Asym_Rate}
	\nonumber \lefteqn{R_{asym}(a,P)  =  \max_{ 0\leq \lambda \leq 1} R_{HK}(\lambda_1=0,\lambda_2= \lambda) } \\
  \nonumber &=& \max_{0\leq \lambda \leq 1} \bigg[ \gamma(\lambda P) + \min \bigg\{   \gamma \bigg( \frac{a P }{1+\lambda P}\bigg) + \gamma \bigg( \frac{a \bar{\lambda}  P }{1+a\lambda P}\bigg)  , \\
 \nonumber & & \frac{1}{2} \gamma \bigg( \frac{\bar{\lambda} P+ aP }{1+\lambda P}\bigg) + \frac{1}{2}  \gamma \bigg( \frac{P+a \bar{\lambda}  P }{1+a\lambda P}\bigg) \bigg\} \bigg]\\
    & = &  \max_{0\leq \lambda \leq 1} \min \{ \Omega_1(a , \lambda , P) \, , \, \Omega_2(a , \lambda , P)\}
\end{eqnarray}
where $\Omega_1(a , \lambda , P) = \gamma(\lambda P) + \gamma \bigg( \frac{a P }{1+\lambda P}\bigg) + \gamma \bigg( \frac{a \bar{\lambda}  P }{1+a\lambda P}\bigg)$
and  $\Omega_2(a , \lambda , P) = \gamma(\lambda P) + \frac{1}{2} \gamma \bigg( \frac{\bar{\lambda} P+ aP }{1+\lambda P}\bigg) + \frac{1}{2}  \gamma \bigg( \frac{P+a \bar{\lambda}  P }{1+a\lambda P}\bigg)$.

For $ P \geq \frac{1-a}{a^2} $, it can easily be shown that  $\frac{\partial \Omega_1 }{\partial \lambda} < 0$ (i.e, $\Omega_1$ is a decreasing function in $\lambda$) and $\frac{\partial \Omega_2}{\partial \lambda} > 0$ (i.e, $\Omega_2$ is an increasing function in $\lambda$). Thus, the solution of equation (\ref{Asym_Rate}) is achieved at $\lambda$ that satisfies $ \Omega_1 = \Omega_2$. Hence, it follows that:
\begin{equation}\label{UnEqPwrRate}
    R_{asym}(a,P)= \log_2 \bigg( \frac{(1+\lambda_{asym} P +aP)(1+aP)}{1+a \lambda_{asym} P} \bigg)
\end{equation}
where $\lambda_{asym}$, which is the power splitting ratio of the other user, is the solution to the following equation:
 \begin{equation}\label{UnEqPwrLamda}
    \sqrt{\frac{1+\lambda P}{1+a \lambda P} } (1+P+aP) = \frac{(1+\lambda P +aP)(1+aP)}{1+a \lambda P}.
\end{equation}
Otherwise for $P < \frac{1-a}{a^2}$, it can be shown that $\frac{\partial \Omega_1 }{\partial \lambda} > 0$ and $\frac{\partial \Omega_2}{\partial \lambda} > 0$, hence the solution of equation (\ref{Asym_Rate}) is obtained at $\lambda_{asym}=1$, yielding the sum rate: $R_{asym}= \log_2 (1+P+aP)$. However, it is easy to see that $R_{asym} < R_{sym}$ for this power region.

Based on the structure of $R_{HK}(\lambda_1,\lambda_2)$ for various values of $a$ and $P$, we conjecture that the maximum HK sum-rate $R_{RS}$ is achieved either using symmetric power splits (i.e., $\lambda_1=\lambda_2=\lambda_{sym}$) and maximizing $R_{HK}$ over $\lambda_{sym}$ or by constraining one of the users to send only a common message and maximizing $R_{HK}$ over the other user's power splitting ratio; i.e., $R_{RS} = \max \{ R_{sym} ,  R_{asym} \}$. The main observations that lead to this conjecture are given in Appendix B. We surprisingly find that despite the fact that the channel is symmetric, the asymmetric HK sum rate $R_{asym}$, which results in an asymmetric rate for user 1 and user 2, outperforms that of the symmetric case $R_{sym}$ for a wide range of $a$ and $P$ values.

If orthogonal signaling is used instead of rate-splitting, the resulting sum-rate is:
\begin{equation}
    R_{orth} = \gamma(2P)
\end{equation}
It worthwhile noting that the lower bound studied in \cite{Etkin} corresponds to the HK sum-rate in (\ref{SumRate}) with the suboptimal choice $\lambda_1 = \lambda_2 = \frac{1}{aP}$.  The corresponding rate, which we denote as $R_{ETW}$, is:
\begin{eqnarray}\label{R_LB}
 \nonumber    R_{ETW} &=& \min \bigg\{ \gamma\left(\frac{1}{2a}\right)+\gamma\left( \frac{P(1+a)-1}{2}\right) , \\
     & & 2 \gamma\left( \frac{1-a+a^2P}{2a}\right)\bigg\}.
\end{eqnarray}
Comparisons between $R_{sym}$, $R_{asym}$, $R_{ETW}$ and $R_{orth}$ for different $a$ and $P$ values are shown in Section \ref{numerical}.

\subsection{High SNR Results}

In order to better understand performance at high SNR, we now study the asymptotic sum-rate offset with a fixed value of $a$ and $P \rightarrow \infty$:
\begin{equation}
    \Delta R(a) \triangleq \lim_{P\rightarrow \infty} (R -  \log_2(P)).
\end{equation}
Straightforward calculation yields the following:
\begin{eqnarray}
    \Delta R_{sym}(a) &=&  \log_2 \bigg( \frac{(1+a)^3}{4 a}\bigg)\\
    \Delta R_{asym}(a) &=&  \log_2 \bigg( \frac{1+a}{\sqrt{a}}\bigg)\\
  \Delta R_{ETW}(a) &=& \log_2 \left(\frac{(2a+1)(a+1))}{4a}\right)\\
  \Delta R_{orth}(a) &=& 1
\end{eqnarray}
Comparing these asymptotic sum-rate offsets, we can conclude the following at the high SNR:
\begin{itemize}
  \item $ R_{sym} >  R_{asym}$ for $0<a<0.087$ while $ R_{asym}> R_{sym}$ for $0.087\leq a<1$.
  \item $ R_{asym} >  R_{orth}$ for all values of $a$ (i.e., $R_{orth}$ is always suboptimal at the high SNR), whereas $ R_{sym}< R_{orth}$ for $ \sqrt{5} - 2 \leq a < 1$.
  \item $ R_{sym} \geq  R_{ETW}$  for all values of $a$ which is a result of the sub-optimal choice of $\lambda$ in $R_{ETW}$.
\end{itemize}
It can be further shown that $R_{sym} $ is achieved with $\lambda_1=\lambda_2=\frac{1-a}{(1+a)(a P)}$ while $R_{asym}$ is achieved with $\lambda_{asym} = \frac{a^{3/2}}{1+a-a^{1/2}}$. Figure \ref{Comparison} gives the plots of $\Delta R_{sym}$,  $\Delta R_{asym}$,  $\Delta R_{ETW}$, and $\Delta R_{orth}$ versus $a$.

\begin{figure}
\centering
\includegraphics[height=5.5cm, width=\columnwidth]{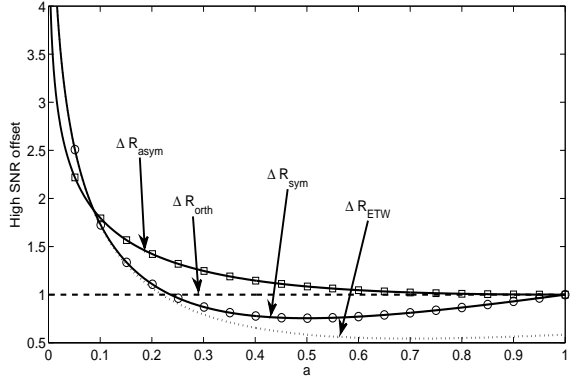}
\caption{Comparing  $\Delta R_{sym}$,  $\Delta R_{asym}$,  $\Delta R_{ETW}$,  and $\Delta R_{orth}$.}
\label{Comparison}
\end{figure}

\subsection{Allowing Time Sharing}

By allowing time sharing between multiple power splits, the total transmission time can be divided into $N$ time slots where each time slot $n$ ($n = 1, \ldots, N$) correspond to a fraction $\delta(n)$ of the whole time (i.e., $\sum_{n=1}^N \delta(n) =1$). The $i$'th user can transmit with power up to $\alpha_i(n) P$ in the $n$'th time slot (this power is further split into private and common messages). In order to satisfy the power constraint for each user, $\alpha_i(n)$ and $\delta(n)$ must satisfy:  $\sum_{n=1}^N \delta(n) \alpha_i(n) P = P$ for $i=1,2$. Clearly, orthogonal signaling is a special case of the general HK strategy (if $\alpha_1(n) = 0$ while $\alpha_2(n) \neq 0$, or vice versa).

We consider time sharing with $N=2$ time slots of equal durations (i.e., $\delta(1)=\delta(2)=\frac{1}{2}$). Thus, the power constraint per user $i$ is such that: $\alpha_i(1) + \alpha_i(2) = 2$ for $i=1,2$. It is noted that TDMA is a special case of this strategy (e.g. if $\alpha_1(1) = \alpha_2(2) = 0$ and $\alpha_1(2) = \alpha_2(1) = 2$). Since $\delta(1)=\delta(2)$, we assume that $\alpha_1(1)=\alpha_2(2)$, $\alpha_1(2)=\alpha_2(1)$, $\lambda_1(1)=\lambda_2(2)$ and $\lambda_1(2)=\lambda_2(1)$. This assumption guarantees equal rates for both users ($R_1=R_2$). Hence, by dropping the time index notation,
the maximum HK sum rate for this case can be obtained as a straightforward extension of Preposition 1:
\footnotesize
\begin{eqnarray}
	\nonumber \lefteqn{R_{\text{TS}}(a,P)  = \max_{0\leq \lambda_1,\lambda_2 \leq 1 , 0\leq \alpha_1,\alpha_2 \leq 2} (R_1 + R_2)} \\
  \nonumber  & = &\max_{0\leq \lambda_1,\lambda_2 \leq 1 , 0\leq \alpha_1,\alpha_2 \leq 2 } \bigg[ \gamma \bigg( \frac{\lambda_1 \alpha_1 P}{1+a\lambda_2 \alpha_2 P}\bigg) + \gamma \bigg( \frac{\lambda_2 \alpha_2 P}{1+a\lambda_1 \alpha_1 P}\bigg)\\
    \nonumber     & +&  \min \bigg\{  \gamma \bigg( \frac{a\bar{\lambda}_2 \alpha_2 P}{1+\lambda_1 \alpha_1 P+a\lambda_2 \alpha_2 P}\bigg)  + \gamma \nonumber  \bigg( \frac{a\bar{\lambda}_1 \alpha_1  P}{1+\lambda_2 \alpha_2 P+a\lambda_1 \alpha_1 P}\bigg) , \\
    \nonumber     & & \frac{1}{2} \gamma \bigg( \frac{\bar{\lambda}_1\alpha_1 P+a\bar{\lambda}_2\alpha_2 P}{1+\lambda_1 \alpha_1 P+a\lambda_2\alpha_2 P}\bigg) + \frac{1}{2} \gamma \bigg( \frac{\bar{\lambda}_2\alpha_2 P+a\bar{\lambda}_1 \alpha_1 P}{1+\lambda_2 \alpha_2 P+a\lambda_1\alpha_1 P}\bigg) \\
      & &  \gamma \bigg( \frac{\bar{\lambda}_1\alpha_1 P}{1+\lambda_1 \alpha_1P+a\lambda_2 \alpha_2 P}\bigg)  + \gamma \bigg( \frac{\bar{\lambda}_2 \alpha_2 P}{1+\lambda_2 \alpha_2 P+a\lambda_1 \alpha_1 P}\bigg)  \bigg\} \bigg]
\end{eqnarray}
\normalsize
such that $\alpha_1 + \alpha_2 = 2$.

After numerically solving this optimization problem, we reach that the maximum HK sum rate for this case is the maximum of the rates achieved using the following schemes:
\begin{enumerate}
  \item TDMA: $\alpha_1(1) = \alpha_2(2) = 0$ and $\alpha_1(2) = \alpha_2(1) = 2$, yielding the sum rate: $R_{orth}= \gamma(2P)$
  \item Set $\lambda_1(n)=\lambda_2(n)$ and $\alpha_1(n)=\alpha_2(n)=1$ (i.e., no advantage for using time sharing), yielding the sum rate $R_{sym}$ which is given in Theorem 1.
  \item Set $\lambda_1$ or  $\lambda_2$ to zero. Using time sharing gives a slight advantage over the asymmetric scheme rate given in (\ref{UnEqPwrRate}) (i.e., $R_{asymTS} \geq R_{asym}$). It is worth noting that if we choose to set $\lambda_1 = 0$, then the optimization results in  $\alpha_2 > \alpha_1$.
\end{enumerate}
Hence, we conjecture that $R_{\text{TS}}$ is the maximum of $R_{orth}$, $R_{sym}$, and $R_{asymTS}$, although this has yet to be proved.

It is also worth noting that a specific time sharing scheme was considered in \cite{Sason}. The author considered the case of $N=4$ time slots, with the following assumptions: $\delta(n)= \beta$ for $n=1,2$, $\alpha_i(n)=2 \beta$ for $i=1,2$ and $n=1,2$, $\delta(n)=\frac{1-2\beta}{2}$ for $n=3,4$, $\alpha_1(3)=\alpha_2(4)=2 (1+2\beta)$,  $\alpha_2(3)=\alpha_1(4)=0$,  $\lambda_1(1) = \lambda_2(2)$ and $\lambda_1(2) = \lambda_2(1)$. As an extension of Preposition 1, the maximum HK sum rate for this case is given by:
\begin{eqnarray}\label{RateSason}
	\nonumber  \lefteqn{R_{\text{Sason}}(a,P)  = \max_{0\leq \lambda_1,\lambda_2 \leq 1 , 0\leq \beta \leq \frac{1}{2}}  }\\
 & &  \bigg(  2 \beta  R_{RS}(a,2 \beta P) + (1-2\beta) \gamma\big( 2 (1+2\beta)P \big)  \bigg)
\end{eqnarray}
Allowing for time sharing with the same assumptions as in \cite{Sason} gives a slight advantage over the case of no time sharing as shown in Section \ref{numerical}.


\section{Numerical Results}\label{numerical}

\begin{figure}
\centering
\includegraphics[height=5.5cm, width=\columnwidth]{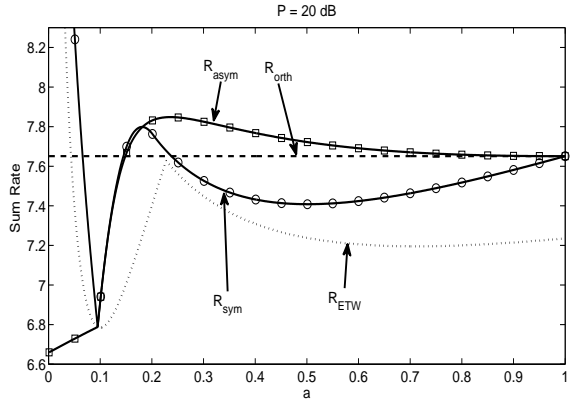}
\caption{Comparing $R_{sym}$, $R_{asym}$, $R_{ETW}$ and $R_{orth}$ for $P=20dB$.}
\label{Fixed_P}
\end{figure}

\begin{figure}
\centering
\includegraphics[height=5.5cm, width=\columnwidth]{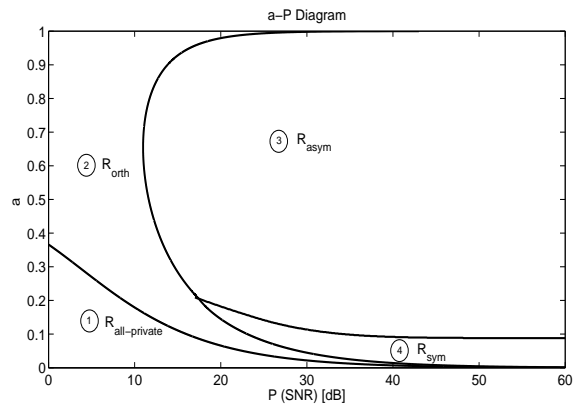}
\caption{Comparing $R_{sym}$, $R_{asym}$ and $R_{orth}$ for different values of $a$ and $P$.}
\label{a_P}
\end{figure}
\begin{figure}
\centering
\includegraphics[height=5.5cm, width=\columnwidth]{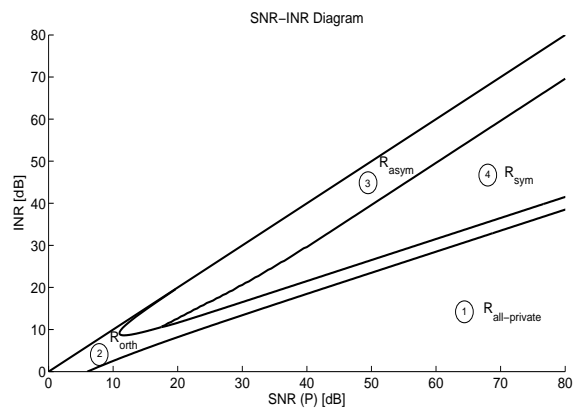}
\caption{Comparing $R_{sym}$, $R_{asym}$ and $R_{orth}$ for different SNR and INR values.}
\label{SNR_INR}
\end{figure}

In Figure \ref{Fixed_P}, the value of $P$ is fixed to $20$ dB, and rates corresponding to the different schemes with no time sharing ($R_{sym}$, $R_{asym}$, $R_{ETW}$ and $R_{orth}$) are plotted. For small values of $a$ ($a < 0.066$), symmetric rate splitting (private message only) achieves the largest rate.  After this there is a small region for which orthogonal is best ($0.066 \leq a < 0.145$), followed by a small region where symmetric rate splitting is again the best ($0.145\leq a < 0.182$).  Finally, for $0.182 \leq a < 0.9792$ asymmetric rate splitting achieves the largest rate while orthogonal is again the best for the remaining small region ($0.9792 \leq a < 1$). Notice also that $R_{ETW} \leq R_{sym}$ as expected, due to the sub-optimal choice of $\lambda$ in $R_{ETW}$.

Figure \ref{a_P} compares $R_{sym}$, $R_{asym}$ and $R_{orth}$ and shows which strategy is best at each value of $a$ and $P$. The numbered regions in the figure correspond to:
\begin{enumerate}
  \item Symmetric rate split with $\lambda_1=\lambda_2=1$ (i.e., private messages only). 
  \item Orthogonal signaling. 
  \item Asymmetric rate split. 
  \item Symmetric rate split.
\end{enumerate}
In order to understand these results from the perspective of \cite{Etkin}, the information in Figure \ref{a_P} is re-plotted in
Figure \ref{SNR_INR} with y-axis equal to $\text{INR}_{dB}= \log (aP)$ (in dB units) instead of $a$. We can see from Figure \ref{SNR_INR} that if $\frac{\text{INR}_{dB}}{\text{SNR}_{dB}} < \frac{1}{2}$ (i.e., region 1), both users should send only private messages. If  $\frac{\text{INR}_{dB}}{\text{SNR}_{dB}} \approx \frac{1}{2}$ (i.e., region 2), orthogonal signaling should be used. If $\frac{\text{INR}_{dB}}{\text{SNR}_{dB}} \approx 1$ (i.e., region 3), one of the users should send the common message only. Finally, if $\frac{1}{2}<\frac{\text{INR}_{dB}}{\text{SNR}_{dB}} < 1$ (i.e., region 4), both users should use the same private/common split ratio.

\begin{figure}
\centering
\includegraphics[height=5.5cm, width=\columnwidth]{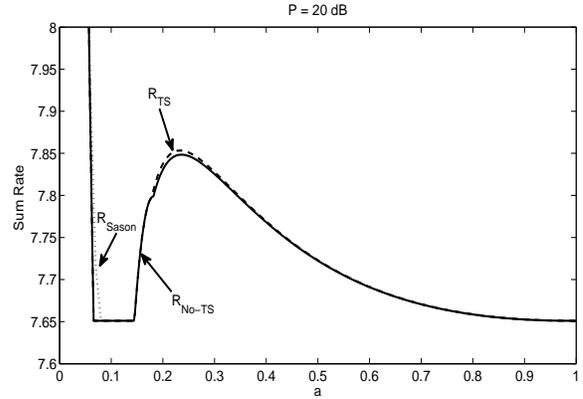}
\caption{Showing the slight advantage of the time sharing schemes $R_{\text{TS}}$ and $R_{\text{Sason}}$ over the case with no time sharing for $P=20dB$.}
\label{Fixed_P_TS}
\end{figure}

\begin{figure}
\centering
\includegraphics[height=5.5cm, width=\columnwidth]{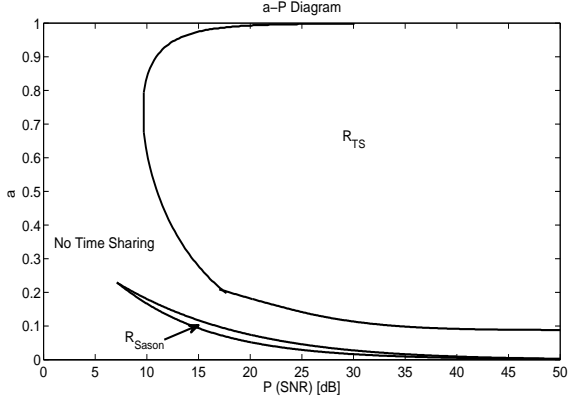}
\caption{Showing the regions in which time sharing schemes outperform the no time sharing case for different values of $a$ and $P$.}
\label{a_P_TS}
\end{figure}

In Figure \ref{Fixed_P_TS}, the value of $P$ is fixed to $20$ dB, and rates corresponding to the time sharing schemes $R_{\text{TS}}$ and $R_{\text{Sason}}$ are compared to the maximum rate achieved with no time sharing (i.e., maximum of $R_{sym}$, $R_{asym}$ and $R_{orth}$). The slight rate advantage of the time sharing schemes is apparat in the figure. Figure  \ref{a_P_TS} shows the values of $a$ and $P$ at which the rates achieved using the time sharing schemes $R_{\text{TS}}$ and $R_{\text{Sason}}$ outperforms the maximum rate achieved with no time sharing.


\section{Conclusions}
We studied the HK achievable sum rate for the two-user symmetric Gaussian interference channel. Without the allowance of time sharing we reached the following results: (a) we derived a closed form expression for the HK sum rate using symmetric power splits at both users and for the HK sum rate using asymmetric power splits achieved with one of the users sending only a common message; (b) we conjectured that the maximum HK sum-rate is achieved either using symmetric power splits or constraining one of the users to send only a common message (i.e., $R_{RS} = \max \{ R_{sym} ,  R_{asym} \}$); (c) we showed that the asymmetric rate outperforms the symmetric one for a wide range of $a$ and $P$ values. At the high SNR regime, we showed that for $a>0.087$, the rate achieved using the asymmetric power splitting outperforms the symmetric case; (d) we showed that orthogonal signaling performs good for a wide range of the low SNR regime and for $\frac{\text{INR}_{dB}}{\text{SNR}_{dB}} \approx \frac{1}{2}$. Finally, we considered specific time sharing schemes and we show that the advantage of using such time sharing schemes, as opposed to the case of not allowing time sharing, is quite small.


\appendices
\section{Proof of Theorem 1}
 Substituting $\lambda_1 = \lambda_2 = \lambda$ in equation \ref{SumRate}:

\begin{eqnarray}
	\nonumber R_{sym} & =&   \max_{0\leq \lambda \leq 1} \min \bigg\{ 2 \gamma \bigg( \frac{\lambda P + a \bar{\lambda} P}{1+a\lambda P}\bigg) \quad ,\\
    \nonumber   & & \qquad \qquad \gamma \bigg( \frac{\lambda P}{1+a\lambda P}\bigg) + \gamma \bigg( \frac{P+a\bar{\lambda} P}{1+a\lambda P}\bigg) \bigg\}\\
    & = &  \max_{0\leq \lambda \leq 1} \min \{ \Psi_1(a , \lambda , P) \, , \, \Psi_2(a , \lambda , P)\},
\end{eqnarray}
where $\Psi_1(a , \lambda , P) = 2 \gamma \bigg( \frac{\lambda P + a \bar{\lambda} P}{1+a\lambda P}\bigg)$
and  \\
$\Psi_2(a , \lambda , P) = \gamma \bigg( \frac{\lambda P}{1+a\lambda P}\bigg) + \gamma \bigg( \frac{P+a\bar{\lambda} P}{1+a\lambda P}\bigg)$.

We note that $R_{sym}$ is upper bounded as:
\begin{eqnarray}\label{upperBound_general}
   \lefteqn{R_{sym}  =   \max_{0\leq \lambda \leq 1} \min \{ \Psi_1(a , \lambda , P)  ,  \Psi_2(a , \lambda , P) \} }  \\
    &\leq &   \max_{0\leq \lambda \leq 1} \Psi_j(a , \lambda , P) =   \Psi_j(a , \lambda^{j*} , P)
\end{eqnarray}
for $j=1,2$, where $\lambda^{j*}(a,P) = \arg \max_{0\leq \lambda \leq 1} \Psi_j(a , \lambda , P)$.

$R_{sym}$ is also lower bounded by:
\begin{eqnarray}\label{lowerBound_general}
   R_{sym} & \geq &  \min \{ \Psi_1(a , \lambda^{j*} , P) \, , \, \Psi_2(a , \lambda^{j*} , P) \}
\end{eqnarray}
for $j=1,2$.

We first consider the case $P<\frac{1-a}{a^2}$.  It is straightforward to see
 that $\frac{\partial  \Psi_1(a , \lambda , P)}{\partial \lambda} > 0$ for this range of $P$, and thus
 $\lambda^{1*}(a,P)=1$. From (\ref{upperBound_general}), this implies $R_{sym} \leq \Psi_1(a , 1 , P)$.
 Since $\Psi_1(a , 1 , P) = \Psi_2(a , 1 , P)$ for any values of $a$ and $P$,
from (\ref{lowerBound_general}) we have $R_{sym} \geq \Psi_1(a , 1 , P)$.  Since the upper and lower
bounds match, we have shown the first case of (\ref{EqPwrRate}).

We next consider the range  $P>\frac{1-a}{a^2}$.
Because $\left(\frac{\partial  \Psi_2(a , \lambda , P)}{\partial \lambda} = 0\right)$ has only one solution at $\lambda = \frac{1-a}{(1+a)(a P)}$, and since $\frac{\partial  \Psi_2(a , \lambda , P)}{\partial \lambda} > 0$
for $\lambda < \frac{1-a}{(1+a)(a P)}$ while $\frac{\partial  \Psi_2(a , \lambda , P)}{\partial \lambda} < 0$
for $\lambda > \frac{1-a}{(1+a)(a P)}$, it follows that: $\lambda^{2*}(a,P)=\frac{1-a}{(1+a)(a P)}$.

We now  restrict attention to $P> \frac{1-a^3}{a^3(a+1)}$.  By (\ref{upperBound_general}),
$R_{sym} \leq \Psi_2(a , \lambda^{2*}, P)$.  For this range of $P$, it is straightforward to see that
$\Psi_2(a , \lambda^{2*} , P) < \Psi_1(a ,\lambda^{2*} , P)$.  Thus,
(\ref{lowerBound_general}) gives $R_{sym} \geq \Psi_2(a , \lambda^{2*}, P)$.  The upper and lower
bounds meet, thereby giving the third case in (\ref{EqPwrRate}).

We finally consider the remaining power region $\frac{1-a}{a^2} < P \leq \frac{1-a^3}{a^3(a+1)}$,
for which we note the following:
\begin{itemize}
    \item $\frac{\partial  \Psi_1(a , \lambda , P)}{\partial \lambda} < 0$
    \item $\Psi_2(a , \lambda^{2*} , P) > \Psi_1(a , \lambda^{2*} , P)$
    \item $\Psi_2(a , \lambda , P)$ is increasing in $\lambda$ for $\lambda < \lambda^{2*}$
\end{itemize}
As a result, it follows that the maximum occurs at the intersection of $\Psi_1(a , \lambda , P)$ and $\Psi_2(a , \lambda , P)$ (the intersection
occurs in the valid range).  The value of $\lambda$ at the intersection is $\lambda^*(a,P) =  \frac{a^2P+a-1}{P}$ which completes the proof of the final case in (\ref{EqPwrRate}). In the sum-rate expression in (\ref{EqPwrRate}), the function $\Psi_1(a,\lambda,P)$ is active for $P \leq \frac{1-a}{a^2}$, $\Psi_2(a,\lambda,P)$ is active for $P > \frac{1-a^3}{a^3(a+1)}$, and the remaining region corresponds to the intersection of $\Psi_1$ and $\Psi_2$.


\section{Asymmetric Rate Splitting Conjecture}
Equation \ref{SumRate} can be written as:
\begin{equation}
     R_{RS}(a,P) = \max_{0\leq \lambda_1,\lambda_2 \leq 1} \min (\Phi_1(a,P,\lambda_1,\lambda_2) , \Phi_2(a,P,\lambda_1,\lambda_2))
\end{equation}
where,
\begin{eqnarray}
  \nonumber   \lefteqn{\Phi_1(a,P,\lambda_1,\lambda_2)  =} \\
  \nonumber  & & \gamma \bigg( \frac{\lambda_1 P}{1+a\lambda_2 P}\bigg) + \gamma \bigg( \frac{\lambda_2 P}{1+a\lambda_1 P}\bigg) \\
  \nonumber  & & + \gamma \bigg( \frac{a\bar{\lambda}_2 P}{1+\lambda_1 P+a\lambda_2 P}\bigg) + \gamma \bigg( \frac{a\bar{\lambda}_1 P}{1+\lambda_2 P+a\lambda_1 P}\bigg) \\
    \nonumber  \lefteqn{\Phi_2(a,P,\lambda_1,\lambda_2) = }\\
 \nonumber   & & \gamma \bigg( \frac{\lambda_1 P}{1+a\lambda_2 P}\bigg) + \gamma \bigg( \frac{\lambda_2 P}{1+a\lambda_1 P}\bigg) \\
 \nonumber   & & + \frac{1}{2} \gamma \bigg( \frac{\bar{\lambda}_1 P+a\bar{\lambda}_2 P}{1+\lambda_1 P+a\lambda_2 P}\bigg) + \frac{1}{2} \gamma \bigg( \frac{\bar{\lambda}_2 P+a\bar{\lambda}_1 P}{1+\lambda_2 P+a\lambda_1 P}\bigg)
\end{eqnarray}

For $ P \geq \frac{1-a}{a^2} $ and for a fixed value of $\lambda_2$, we find that $\Phi_1$ is a decreasing function in $\lambda_1$ (i.e., $\frac{\partial \Phi_1 }{\partial \lambda_1} < 0$). The function $\Phi_2$ is either monotonically increasing, monotonically decreasing or increasing then decreasing in  $\lambda_1$. Specifically, fixing $\lambda_2$ we can observe the following:
\begin{enumerate}
  \item For small values of $\lambda_2$, $\Phi_2(\lambda_1)$ is monotonically increasing in $\lambda_1$, thus $\lambda_1$ that solves $\max_{0\leq \lambda_1, \leq 1} \min (\Phi_1(\lambda_1 ), \Phi_2(\lambda_1 ))$ is at the intersection of $\Phi_1(\lambda_1 )$ and  $\Phi_2(\lambda_1 )$ (i.e., at the value of $\lambda_1$ that satisfies $\Phi_1(\lambda_1 )=\Phi_2(\lambda_1 )$)
  \item  For large values of $\lambda_2$, $\Phi_2(\lambda_1)$ is monotonically decreasing, thus $\lambda_1$ that solves  $\max_{0\leq \lambda_1, \leq 1} \min (\Phi_1(\lambda_1 ), \Phi_2(\lambda_1 ))$ is at $\lambda_1 =0$.
    \item For the remaining values of $\lambda_2$, $\Phi_2(\lambda_1)$ is increasing till a certain value of $\lambda_1$ then decreasing, thus $\lambda_1$ that solves $\max_{0\leq \lambda_1, \leq 1} \min (\Phi_1(\lambda_1 ), \Phi_2(\lambda_1 ))$ is either at $\lambda_1$ that satisfies $\Phi_1(\lambda_1 )=\Phi_2(\lambda_1 )$ or at $\lambda_1= \arg \max_{\lambda_1}\Phi_1(\lambda_1 ) $.
\end{enumerate}
Therefore, based on these observations and from the symmetry of $\Phi_1$ and $\Phi_2$ with respect to $\lambda_1$ and $\lambda_2$, we conclude that $\lambda_1$ and $\lambda_2$ that solve $\max_{0\leq \lambda_1,\lambda_2 \leq 1} \min (\Phi_1(a,P,\lambda_1,\lambda_2) , \Phi_2(a,P,\lambda_1,\lambda_2))$ is either at the maximum point of the line of intersection between $\Phi_1$ and $\Phi_2$ or at the local maximum of $\Phi_1$. Two further observations; it can be shown that the maximum point of the intersection line is either at $\lambda_1=\lambda_2$ or at $\lambda_1=0$ (or $\lambda_2 =0$), and the local maximum of $\Phi_1$ takes place at $\lambda_1=\lambda_2$ (This can be shown by differentiating $\Phi_1$ with respect to $\lambda_1$ and $\lambda_2$ and simultaneously solving both equations for $\lambda_1$ and $\lambda_2$).

Hence, we conclude that the solution of $\max_{0\leq \lambda_1,\lambda_2 \leq 1} \min (\Phi_1(a,P,\lambda_1,\lambda_2) , \Phi_2(a,P,\lambda_1,\lambda_2))$ has two possibilities:
\begin{enumerate}
  \item Symmetric splitting ratio at both users (i.e., $\lambda_1 = \lambda_2 = \lambda_{sym}$)
  \item Asymmetric power split with $\lambda_1 = 0$ and $\lambda_2 = \lambda_{asym}$ (or vice versa).
\end{enumerate}

\end{document}